
\documentclass[submission,copyright,creativecommons]{eptcs}

\usepackage{graphicx,proof} 
\usepackage{amsthm} 
\usepackage{amsmath}
\usepackage{amssymb}
\usepackage{iftex}

\ifpdf
  \usepackage{underscore}         
  \usepackage[T1]{fontenc}        
\else
  \usepackage{breakurl}           
\fi

\newtheorem{theorem}{Theorem}[section]

\newtheorem{lemma}[theorem]{Lemma}

\newtheorem{definition}[theorem]{Definition}

\newtheorem{example}[theorem]{Example}

\newcommand{\Lens}{\msf{Lens}}

\newcommand{\tensor}{\otimes}

\newcommand{\Sel}{\msf{Sel}}

\usepackage{xspace}

\newcommand{\abs}[1]{|#1|}

\newcommand{\msf}[1]{\mbox{{\sf #1}}}

\newcommand{\T}{\msf{T}}
\newcommand{\R}{\msf{R}}
\renewcommand{\P}{\msf{P}}
\newcommand{\AM}{\msf{argmax}}

\title{A Category Theoretic Approach to \\
Compositional Approximation in Game Theory}
\author{Neil Ghani
\institute{MSP Group, University of Strathclyde, Scotland}
\email{ng@cis.strath.ac.uk}
}

\begin{document}
\maketitle

\begin{abstract}

\noindent 
This paper uses category theory to guide the development of an entirely new approach to approximate game theory. Game theory is the study of how different agents within a multi-agent system take decisions. At its core, game theory asks what an optimal decision is in a given scenario. Thus approximate game theory asks what is an approximately optimal decision in a given scenario. This is important in practice as --- just like in much of computing --- exact answers maybe too difficult (or even impossible) to compute given inherent uncertainty in input. \\[1ex]

\noindent We consider first {\em Selection Functions} which are a simple model of compositional game theory.  We develop i) a simple yet robust model of approximate equilibria; ii) the algebraic properties of approximation with respect to selection functions; and iii)  relate approximation to the compositional structure of selection functions. We then repeat this process for {\em Open Games} --- a more advanced model of game theory featuring the key operation of sequential composition of games. Finally, we use approximation to develop new metrics on game theory and show these yield core theorems in what one might term  "Metric Game Theory".
\end{abstract}

\section{Introduction}
This paper uses category theory to develop 
an entirely new approach to compositional approximate game theory. Approximation in computation is ubiquitous e.g.\ it arises in: i) stochastic systems, where one merely has probability distributions over values induced by inherent/simulated randomness; ii)  resource limited environments, where exact computation is prohibitively expensive; iii) systems with imperfect/partial recall, where one only has limited information about what has happened or the intentions/trustworthiness of each agent; and iv) non-exact computation where primitive data (e.g.\ from sensors) is inexact and supplied with error bars. These scenarios arise in e.g. cyber-physical systems, machine learning, robotics, automotive engineering, aerospace, and energy systems. \\[1ex]

\noindent On the other hand, game theory is the study of how different agents within a multi-agent system take decisions. The simplest model of game theory we consider are the {\em multi-valued selection functions}~\cite{escardo-oliva} which are functions  $s:(X \to V) \to PX$. Henceforth we merely call them selection functions. Here, i) the set $X$ is thought of as a set of actions (or moves, or choices) the agent can consider; ii) the set $V$ is thought of as a type of utilities; and iii) a function $X \to V$ is called a {\em utility function} as it assigns, to each possible action, a utility measuring how good that action is deemed to be. A selection function then maps 
{\em every} utility function to those actions deemed optimal for that utility function as as such is a model of game theory. The canonical selection function is $\AM$ defined by
\[
x \in \AM \; f \;\; \mbox{ iff } \;\; 
(\forall x':X) \; f x \geq f x' 
\]
Another selection function is $\epsilon\!\!-\!\!\AM$ defined by 
\[
x \in \epsilon\!\!-\!\!\AM \; f \;\; \mbox{ iff } \;\; 
(\forall x':X) \; f x \geq f x' - \epsilon
\]
which means for an action $x$ to be $\epsilon$-optimal it need merely have an associated utility which is within $\epsilon$ of a true optimal action. This approach aligns with the general approach to approximation of uncertainty budgets as we track - and therefore can check - how much utility is lost by moving to an approximately optimal decision. Notice this is NOT saying $x$ is close to an optimal choice of action, rather that the utility of $x$ is close to being the utility of an optimal action. One might study the former and, indeed, this was our first intention. However, the mathematical behaviour of that concept was poorer so we switched to the approach taken in this paper.  Nevertheless, this was the first of several moments where key decisions need to be taken to obtain a well behaved mathematical theory showing the combination of approximation and game theory to be a delicate and nuanced endeavour.  \\[1ex]

\noindent A second difficulty arises 
from the desire for {\em compositionality} to ensure scalability. Compositional systems are those where larger systems can be defined by assembling smaller subsystems together which means that proving properties of larger systems can be reduced to proving properties of smaller (and hence easier to reason about) subsystems. At its core, compositionality is  a story about structure and so the natural mathematical medium in which to develop compositionality is category theory. Recently,  Compositional Game Theory~\cite{julesGames} was introduced to bring the advantages of compositionality to game theory. However, within compositional game theory there are a number of different notions of morphisms of open games and, as we shall see, certain notions behave better with regards to approximation than others. Finding the right notion of morphisms of open games to interact well with approximation was another issue we had to solve.\\[1ex]

\noindent Concretely, this paper will
\begin{itemize}
    \item introduce a notion of the $\epsilon$-approximation of a game by defining for any game $G$, another game $\T_{\epsilon} G$ which we think of as containing not just the equilibria of $G$, but also the $\epsilon$-equilibria of $G$ --- this is an entirely new approach to approximation of equilibria and shows novelty of methodology.
    \item show that $\T_{\epsilon}$
has good algebraic structure in that 
it is i) functorial; and ii) possesses graded structure --- this shows our methodology is canonical in being aligned with other work on quantitative computation.\cite{tarmo} 
\item show $T_{\epsilon}$ interacts well with the compositional structure of game theory, in particular with Nash equilibria and sequential composition. There is no particular reason that the approximation of a large game built from game-constructors should retain that structure --- we show this holds for the key cases of parallel (or Nash) product and sequential composition. 
This is crucial evidence for the utility of our approach, namely that preserves the compositional structure of game theory.
\item use our approximate game theory to create what one might call {\em Metric Game Theory} which i) not only builds new games from old; but ii) ensures if one substitutes a subgame with another which has similar behaviour, the overall system remains similar to the original. The crucial construction permitting {\em Metric Game Theory} is a metric on games which we build from our approximate game theory.
\end{itemize} 
\noindent This third item is particularly surprising and thus our strongest contribution --- as mentioned there is no inherent reason why, for example, the approximations of a Nash equilibria should form a Nash equilibria. Finally, we argue in the conclusions that our methodology is also general in that it can apply to other forms of game theory, eg Markov Decision Processes, and beyond game theory into the huge area of machine learning. This latter application area is particularly exciting as formally quantifying distance from optimality is a way of expressing how much has been learned and thus contributing to the XAI agenda.  \\[1ex]

\noindent This paper is structured as follows. In Section 2, we cover preliminaries pertaining to metric spaces, lenses and selection functions. Section 3 develops approximation for selection functions, discusses the algebraic properties of approximation and shows approximation is compositional. Section 4 extends these results to open games~\cite{julesGames}, while Section 5 uses approximation to develop Metric Game Theory. Section 6 concludes the paper with conclusions and directions for future research. 

\section{Preliminaries: Metric Spaces and Lenses}

This paper should be accessible to those with a basic understanding of game theory and category theory. Introductory texts are~\cite{watson2002strategy,eogt,or94}, for game theory and ~\cite{mac71,jaco16:intro} for category theory.
 
\subsection{Metric Spaces}  Let $\R$ be the positive reals extended with an infinity $\infty$
\begin{definition} Define:
  \begin{enumerate}
    \item An extended metric space is a set $X$ with a function $d: X \times X \to \R$ satisfying symmetry, the triangle inequality and $d(x,y) = 0$ iff $x=y$.  
    \item \emph{short} maps between metric spaces $(X,d)$ and $(Y,d')$ are functions $f:X \to Y$ such that $d(fx,fy) \leq d(x,y)$.
  \end{enumerate}
\end{definition}
\noindent We write $\abs{X}$ for the underlying set of a metric space $X$. 

\begin{definition}
  If $(X,d)$ and $(Y,d')$ are extended metric spaces, then $(X,d) \tensor (Y,d')$ is the extended metric space with underlying set $X \times Y$, where
  \[d_{X \tensor Y}((x,y),(x',y')) = \msf{max}\{ d(x,x') , d'(y,y')\} \]
\end{definition}

\noindent Other options are possible, eg \msf{sum}, average etc. The following explains the use of $\infty$ as a possible distance --- it is needed to ensure the metric on function spaces is well defined.
\begin{definition}
  Let $X,Y$ be a metric spaces. 
  The set of functions $[X,Y]$ is a metric space via the \emph{sup-metric}:
  \[d(f,g) = \sup_{x \in X} \; d(fx,gx).\]
\end{definition}

\subsection{Lenses}
Lenses underpin selection functions and open games and so we introduce them here. While lenses can be built over categories with appropriate structure, to keep our presentation simple, we work with lenses over sets. 

\begin{definition}
A lens is given by a pair of sets $(X,V)$. A map of lenses $f:(X,V) \to (Y,Z)$ is given by a pair $(f_0,f_1)$ consisting of functions $f_0:X \to Y$ and 
$f_1 : X \times Z \to V$.     
\end{definition}

\noindent The following is well known~\cite{julesGames,julesPhD}:

\begin{lemma}
Lenses from a symmetric monoidal category $\Lens$ with tensor given by $(X,V) \otimes (X',V') = (X \times X', V \times V')$ and with unit given by $I = (1,1)$.
\end{lemma}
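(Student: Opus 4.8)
The plan is to verify the monoidal-category axioms directly, observing that all of the required coherence data is inherited from the cartesian monoidal structure of $\mathbf{Set}$. First I would pin down the categorical structure of $\Lens$: the identity on $(X,V)$ is the lens $(\mathrm{id}_X, \pi_2)$ whose backward component $X \times V \to V$ is the second projection, and the composite of $f=(f_0,f_1):(X,V)\to(Y,Z)$ with $g=(g_0,g_1):(Y,Z)\to(W,U)$ is the lens whose forward part is $g_0 \circ f_0$ and whose backward part sends $(x,u)$ to $f_1(x, g_1(f_0\, x, u))$. I would then check the category axioms: the forward components compose associatively because composition in $\mathbf{Set}$ does, while the unit laws and associativity for the backward components follow by unwinding this substitution formula.

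Next I would promote $\otimes$ to a functor $\Lens \times \Lens \to \Lens$. On objects it is as stated; on a pair of maps $f:(X,V)\to(Y,Z)$ and $f':(X',V')\to(Y',Z')$ I would define $f\otimes f'$ to have forward part $f_0\times f'_0$ and backward part $((x,x'),(z,z'))\mapsto (f_1(x,z), f'_1(x',z'))$, modulo the evident reindexing of $(X\times X')\times(Z\times Z')$. Functoriality — preservation of identities and of the interleaved composite — is then a direct calculation. The associator, the left and right unitors, and the braiding are each defined as the lens whose forward part is the corresponding cartesian isomorphism of $\mathbf{Set}$ (reassociation of products, $1\times X\cong X$, and the swap) and whose backward part is the matching isomorphism on the $V$-components; these are isomorphisms in $\Lens$ precisely because the underlying cartesian isomorphisms are invertible in $\mathbf{Set}$.

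Finally I would verify naturality of these isomorphisms together with the pentagon, triangle and hexagon coherence equations. Each of these reduces to the corresponding coherence law for the cartesian product in $\mathbf{Set}$, applied separately to the forward ($X$-)components and the backward ($V$-)components, so no genuinely new content appears at this stage.

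The main obstacle — and the only step requiring real care — is the bookkeeping around the backward components, because composition of lenses interleaves the two backward maps contravariantly: $f_1$ is fed the output of $g_1$. This makes the associativity-of-composition check and the proof that $\otimes$ preserves composition the most error-prone computations, since one must track exactly which copy of each variable is threaded where. Once the substitution discipline is fixed, every remaining identity is an instance of a cartesian-coherence fact in $\mathbf{Set}$, which is why the result is classical; the references cited give the construction in full generality over an arbitrary category with products.
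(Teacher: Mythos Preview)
Your proposal is correct: the direct verification you outline---defining identities and composition in $\Lens$, promoting $\otimes$ to a functor, and then reading off the associator, unitors, and braiding from the cartesian structure of $\mathbf{Set}$---is exactly the standard argument, and your remark that the only delicate point is the contravariant threading of backward components through lens composition is well taken.

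The paper, however, does not prove this lemma at all: it simply flags the result as well known and defers to the cited references for the construction. So your approach is not so much different from the paper's as strictly more explicit; you are supplying the details the paper chose to omit. There is nothing to correct here, though if you were writing this up you could reasonably compress the verification to a sentence or two and a citation, as the paper does.
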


\noindent Note i) an element of $X$ is a lens-map $I \to (X,V)$; and ii) a function $X \to V$ is equivalently a lens map $(X,V) \to I$. These two observations are useful, eg we can often represent complex functions $X \to V$ more simply as composites of lens maps.

\subsection{Selection Functions}

\noindent A basic form of game theory is given by multi-valued selection functions~\cite{escardo-oliva} which, given sets $X,Y$, are defined by $\msf{Sel}(X,Y) = (X \to Y) \to PX$ where $PX$ is the set of powersets of $X$. We recall that for simplicity we just call them selection functions henceforth. 
Intuitively, a selection function maps a utility function (which assigns to each potential action $x \in X$ to a resulting utility $y \in Y$) to a set of actions deemed to be optimal for that specific utility function.
Give two selection functions $s,t:(X \to V) \to PX$, we write
$s \subseteq t \;\; \msf{iff} \;\; (\forall \; f \!: \!X \to V) \;\; sf \subseteq tf $. The canonical selection function is $\AM$ defined as follows:

\begin{example}
    Define the argmax selection function $\AM:(X \to Y) \to PX$ by $$\AM f \; = \; \{ \;\; x \in X \;\; | \;\; (\forall x' \in X) \;\; fx \geq f x'\}$$
\end{example}

\noindent Note that the selection function model of game theory deals with not merely one utility function but, rather, all utility functions. This is also the case of open games and is a hallmark --- indeed key distinguishing feature --- of compositional models of game theory. When it comes to maps of selection functions, we choose the following for reasons that will become clearer later.

\begin{definition}
    Let $s : \Sel(X,V)$ and $t:\Sel(X',V')$. A map from $s$ to $t$ 
    consists of a map of lenses $\alpha: (X,V) \to (X',V')$ such that 
    \[
(\forall k:X' \to V')(\forall x :X) \;\; \alpha x \in t k \Rightarrow x \in s (k\alpha)
    \]    
\end{definition}
\noindent  Note in the above we write the map $X \to V$ as the lens-composite of lens maps $k\alpha$ as this is much cleaner than a direct definition. This is an one example of the power of lens composition to simplify definitions as we remarked earlier. We continue to use lens composition via juxtaposition through the rest of the paper without remarking on it. The above definition also possesses a contravariance akin to how maps of containers (aka dependent lenses) are contravariant in their position maps~\cite{containers}.
An alternative covariant definition for a map of selection functions would have been to require 
 \[
(\forall k:X' \to V') \;\; x \in s (k\alpha)  \Rightarrow \alpha x \in t k
    \] 
but, as we shall see later, this does not give a functorial theory of approximation. Henceforth we write $\Sel$ for the category of all selection functions and the maps between them. 
Selection functions are a compositional model of game theory as they support a Nash product. This arises by extending the monoidal product on lenses to one on selection functions:
\begin{definition}
    The category $\Sel$ of selection functions and maps of selection functions is monoidal with monoidal product, also called the Nash product, $$\otimes: \msf{Sel}(X,Y)\times \msf{Sel}(X',Y') \to \msf{Sel}(X\times X',Y\times Y')$$ defined by 
    \begin{eqnarray*}
        (x,x') \in (s \otimes t) f & \msf{iff} & x \in s (\pi_0 f(-,x')) \\
                                  &           & \wedge \\
                                &             & x' \in t (\pi_1 f(x,-))
    \end{eqnarray*}
\end{definition}

\noindent Note as an example, we have the beautiful equation showing Nash equilibria are not a primitive but rather a derived concept.
\[
\msf{Nash} = \msf{argmax} \otimes \msf{argmax} 
\]
See~\cite{julesPhD} for more details. This equation also highlights how our product of selection functions can be seen concretely as the two games played in parallel with the famous Nash equilibrium of Prisoners Dilemma as the specific instance of two argmaxes run in parallel. 
In developing approximate game theory, we ensure it is compositional in that approximation interacts well with compositional operators such as the Nash product. 

\section{Approximation and Selection Functions}
Given selection functions form a compositional model of game theory we can ask two natural questions: i) can we account for approximate equilibria in selection functions; and ii) is approximation compositional in that --- for example --- if we approximate a Nash product of two games, do we get the Nash product of two approximations? We answer both questions positively --- the second to our surprise as there is no particular reason to believe that approximation of a structured game preserves that structure. The fact it does suggests the compositional approach to game theory and the approach to approximation advocated here are well designed concepts capturing fundamental logical structure which other approaches have missed. We begin by defining approximation in selection functions via a function
\[
\msf{T} : (\epsilon : \R) \to \Sel(X,V) \to \Sel(X,V)
\]
by
\[
x  \in (\msf{T}_{\epsilon} G)  k \;\;\; \msf{iff} \;\; \;(\exists k' \sim_{\epsilon} k) \;\; x \in G k'
\]
where we write $k' \sim_{\epsilon} k$ for $d(k',k) \leq \epsilon$ using the sup-metric on functions defined above. What is nice about this definition is that not only is it extremely simple and thus elegant, but it also formalises the intuition that an $\epsilon$-optimal choice for a utility function $k$ is an actual optimal choice, but for a utility function $k'$ differing no more than $\epsilon$ from $k$. Getting this definition right, and getting the right notion of morphism between selection functions, were the fundamental keys required to achieving the smooth mathematical theory we have been able to develop in this paper. The first thing to do is to relate our notion of approximation to the traditional notion of approximation that one finds in the literature.

\begin{lemma}
   Let $\epsilon \geq 0$. Then $\T_{\epsilon}(\AM) \subseteq 2\epsilon\!\!-\!\!\AM$
\end{lemma}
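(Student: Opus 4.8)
The plan is to unfold both sides of the inclusion to their pointwise meaning and then chain inequalities, with the factor of $2\epsilon$ arising because the witness utility function is compared against $k$ at two separate points. I assume throughout that $V$ is the reals with its usual order and metric $d(a,b) = \abs{a-b}$, so that both $\AM$ and the sup-metric are meaningful.

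First I would fix an arbitrary utility function $k : X \to V$ and an action $x$ with $x \in (\T_{\epsilon}\AM)\, k$. By the definition of $\T_{\epsilon}$ this supplies a witness $k'$ with $d(k',k) \leq \epsilon$ and $x \in \AM\, k'$. Unfolding $\AM$, the latter says $k'x \geq k'x''$ for every $x''$. Unfolding the sup-metric, the bound $d(k',k) \leq \epsilon$ says precisely that $\abs{k'y - ky} \leq \epsilon$ for every $y \in X$; in particular $ky \geq k'y - \epsilon$ and $k'y \geq ky - \epsilon$ hold at every point.

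Next I would verify the defining condition of $2\epsilon\!\!-\!\!\AM$, namely $kx \geq kx' - 2\epsilon$ for all $x'$. Fixing $x'$, the chain
\[
kx \;\geq\; k'x - \epsilon \;\geq\; k'x' - \epsilon \;\geq\; (kx' - \epsilon) - \epsilon \;=\; kx' - 2\epsilon
\]
uses, in order, the pointwise metric bound at $x$, optimality of $x$ for $k'$ instantiated at $x'$, and the pointwise metric bound at $x'$. This is exactly $x \in 2\epsilon\!\!-\!\!\AM\, k$, and since $k$ and $x$ were arbitrary the inclusion follows.

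There is no serious obstacle here: the argument is a three-step inequality chase. The only point demanding care is the bookkeeping of the direction of each inequality, together with the recognition that the sup-metric bound must be spent \emph{twice} --- once to pass from $k$ to the witness $k'$ at the chosen action $x$, and once to pass back from $k'$ to $k$ at the competing action $x'$ --- which is the conceptual source of the doubling to $2\epsilon$. I would also remark that this factor is genuine rather than an artefact of the estimate: when $x'$ exceeds $x$ under $k$ by almost $2\epsilon$, a perturbation $k'$ within $\epsilon$ can simultaneously lift $x$ by $\epsilon$ and depress $x'$ by $\epsilon$, making $x$ a true maximiser of $k'$, so the constant $2\epsilon$ cannot in general be improved.
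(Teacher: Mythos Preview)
Your proof is correct and follows essentially the same approach as the paper: unfold the definition of $\T_{\epsilon}$ to obtain a witness $k'\sim_\epsilon k$ with $x\in\AM\,k'$, then use the $\epsilon$-closeness at both $x$ and the competitor $x'$ to derive $kx \geq kx' - 2\epsilon$. You have merely spelled out the three-step inequality chain that the paper compresses into a single sentence, and added a (correct) tightness remark that the paper omits.
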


\begin{proof}
    Let $x \in \T_{\epsilon}(\AM)k$ for some utility function $k:X \to V$. Then there is a $k' \sim_{\epsilon} k$ such that $x \in \AM(k')$, ie for all $x' \in X$, we have $k'x \geq k'x'$. But since $k' \sim_{\epsilon} k$, we thus have 
    $kx \geq kx' - 2\epsilon$. Thus $x \in 2\epsilon\!\!-\!\!\AM \; k$
\end{proof}

\noindent Assuming $X$ has a decidable equality, which includes the key case where $X$ is finite, we can go further. Note the use of decidable equality is to ensure $k'$ below is well defined.

\begin{lemma}
\label{lem:approx-em}
Let $\epsilon \geq 0$. If $X$  has a decidable equality, then
\[
\epsilon\!\!-\!\!\AM \subseteq \T_{\epsilon}(\AM)
\]
\end{lemma}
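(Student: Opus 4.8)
The plan is to start from an $\epsilon$-optimal action and manufacture a genuinely optimal utility function lying within distance $\epsilon$ of the given one. Suppose $x \in \epsilon\!\!-\!\!\AM\, k$ for some $k : X \to V$, so that $kx \geq kx' - \epsilon$ for every $x' \in X$. To witness $x \in \T_{\epsilon}(\AM)k$ I must exhibit some $k' \sim_{\epsilon} k$ with $x \in \AM(k')$. The natural candidate leaves $k$ untouched everywhere except at the single point $x$, where its value is raised by exactly $\epsilon$:
\[
k'y \;=\; \begin{cases} kx + \epsilon & \text{if } y = x,\\ ky & \text{otherwise.}\end{cases}
\]
This is precisely where decidable equality on $X$ is used: without it the case split defining $k'$ would not yield a well-defined function, which is the content of the note preceding the statement.

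Next I would verify the two defining conditions. For the distance, $k$ and $k'$ agree at every point except $x$, so in the sup-metric $d(k',k) = d(k'x, kx) = \epsilon$, whence $k' \sim_{\epsilon} k$. For optimality I must check $k'x \geq k'x'$ for all $x' \in X$. When $x' = x$ this is an equality; when $x' \neq x$ we have $k'x = kx + \epsilon$ and $k'x' = kx'$, so the required inequality $kx + \epsilon \geq kx'$ is exactly the hypothesis $kx \geq kx' - \epsilon$ rearranged. Hence $x \in \AM(k')$, and therefore $x \in \T_{\epsilon}(\AM)k$ by the definition of $\T_\epsilon$, establishing the inclusion.

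I do not expect a serious obstacle: the entire content lies in choosing the witness $k'$, and once one sees that bumping the value at $x$ by $\epsilon$ is simultaneously the minimal perturbation turning $x$ into a true maximiser and a perturbation of size exactly $\epsilon$, the two checks are immediate. The only genuinely delicate point, and the reason the hypothesis on $X$ appears, is the reliance on decidable equality to define the pointwise modification at $x$; this asymmetry is what distinguishes this direction from the preceding lemma, where the containment $\T_{\epsilon}(\AM) \subseteq 2\epsilon\!\!-\!\!\AM$ holds with no such assumption.
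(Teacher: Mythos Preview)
Your proposal is correct and matches the paper's own proof essentially line for line: the same witness $k'$ (bump $k$ by $\epsilon$ at the single point $x$, leave it unchanged elsewhere), the same use of decidable equality to justify the case split, and the same two verifications. If anything, your write-up is slightly more careful, explicitly handling the $x'=x$ case and computing the sup-metric distance rather than asserting $k\sim_\epsilon k'$ as ``clear.''
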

\begin{proof}
Let $x \in \epsilon\!\!-\!\!\AM(k)$ for a utility function $k:X \to V$. Then forall $y \in X$, $k x \geq ky - \epsilon$. Define $k':X \to V$ by $k' x = k x + \epsilon$ and $k' y = k y$ for all $y \neq x$. Clearly $k \sim_{\epsilon} k'$. Further, for any $y\neq x$, we have $k' x \geq k'y$. Thus  $x \in \AM(k')$ and so $x \in 
\T_{\epsilon}(\AM)(k)$
\end{proof}

\noindent Moving to a categorical analysis, the first thing we would like to establish is the functoriality of $\T_{\epsilon}$. This is non-trivial as the needs of approximation put constraints on how lens maps back-propagate values. In essence, we must restrict to {\em short} lens maps which we define now 

\begin{definition}
    A short lens map is a lens map $\alpha :(X,V) \to (X',V')$ where, for every $x \in X$, $\alpha(x,-):V' \to V$ is a short map, ie non-expanding.  
\end{definition}

\noindent Short maps are a natural condition --- the condition reflects desire that coutility (or backpropagation) doesn't inflate differences in the values to be backpropagated. Were we to allow such inflation, system behaviour could well become chaotic as small amounts of sub-optimality would be magnified into large system recongifuratation. Note the identity is a short map and the composite of short maps is short and hence the category of lenses restricts to a category of lenses and short maps denoted $\Lens_s$. Similarly, we define the category of selection functions and maps whose underlying lens map is short by $\Sel_s$. And, finally, the tensor product on lenses and selection functions restricts to a monoidal structure on $\Sel_s$ 

\begin{lemma}
    Let $\alpha:(X,V) \to (X',V')$ be a short lens map. If $k, k':X' \to V'$ is such that $k \sim_{\epsilon} k'$, then $k\alpha \sim_{\epsilon} k'\alpha$  
\end{lemma}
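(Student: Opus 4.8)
The plan is to reduce the claimed sup-metric inequality to a pointwise statement about the backward component of $\alpha$, where the short-map hypothesis applies directly. First I would make explicit what the composite lens map $k\alpha$ is as an ordinary function $X \to V$. Writing $\alpha = (\alpha_0,\alpha_1)$ with $\alpha_0 : X \to X'$ and $\alpha_1 : X \times V' \to V$, and viewing $k : X' \to V'$ as a lens map $(X',V') \to I$ via the correspondence noted after the definition of $\Lens$, unfolding lens composition shows that $k\alpha : (X,V) \to I$ corresponds to the function sending $x \mapsto \alpha_1(x, k(\alpha_0 x))$; likewise $k'\alpha$ sends $x \mapsto \alpha_1(x, k'(\alpha_0 x))$. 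This unfolding is the one genuinely content-bearing step, since it is where the structure of lens composition actually enters.

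With these formulas in hand, I would fix an arbitrary $x \in X$ and bound the pointwise distance. We have
\[
d\big((k\alpha)x,\,(k'\alpha)x\big) \;=\; d\big(\alpha_1(x,k(\alpha_0 x)),\,\alpha_1(x,k'(\alpha_0 x))\big).
\]
Because $\alpha$ is a short lens map, the map $\alpha_1(x,-) : V' \to V$ is non-expanding, so the right-hand side is at most $d\big(k(\alpha_0 x),\,k'(\alpha_0 x)\big)$. Finally, by definition of the sup-metric this is at most $\sup_{y \in X'} d(ky,k'y) = d(k,k') \leq \epsilon$, invoking the hypothesis $k \sim_{\epsilon} k'$.

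Since the bound $d((k\alpha)x,(k'\alpha)x) \leq \epsilon$ holds uniformly in $x$, taking the supremum over $x \in X$ yields $d(k\alpha,k'\alpha) \leq \epsilon$, which is precisely $k\alpha \sim_{\epsilon} k'\alpha$. I do not expect any real obstacle: the lemma is tailored so that the short-map condition on $\alpha_1(x,-)$ is exactly what prevents backpropagation from inflating the discrepancy between $k$ and $k'$, and the remaining inequalities are immediate once the pointwise reduction is set up. The only place demanding care is the bookkeeping in the first step --- correctly identifying that the backward action of the composite at $x$ is $\alpha_1(x,-)$ applied after the forward pass $\alpha_0 x$ --- after which the argument is routine.
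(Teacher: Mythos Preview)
Your argument is correct and is precisely the unfolding the paper has in mind: the paper's own proof consists of the single line ``Direct from the definition of what a short lens map is,'' and your proposal simply spells out that direct verification in full. There is no divergence in approach.
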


\begin{proof}
    Direct from the definition of what a short lens map is.
\end{proof}

\noindent We can now prove functoriality:

\begin{lemma}
    Let $\alpha: s \to t$ be a map of selection functions over the short lens map $\alpha:(X,V) \to (X',V')$ of lenses. Then $\alpha$ also defines a map of selection functions $ \T_{\epsilon}s \to \T_{\epsilon}t$ 
\end{lemma}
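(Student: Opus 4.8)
The plan is to chase the defining condition of a map of selection functions through both layers of the $\T_{\epsilon}$ construction, with the preceding short-lens lemma supplying the one crucial link. Concretely, I must verify that $\alpha$ satisfies the map condition for $\T_{\epsilon}s \to \T_{\epsilon}t$, namely that for every $k:X' \to V'$ and every $x \in X$ we have $\alpha x \in (\T_{\epsilon}t)k \Rightarrow x \in (\T_{\epsilon}s)(k\alpha)$.

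First I would unfold the hypothesis. By the definition of $\T_{\epsilon}$, membership $\alpha x \in (\T_{\epsilon}t)k$ yields a utility function $k' \sim_{\epsilon} k$ with $\alpha x \in t k'$. Since $\alpha$ is by assumption a map of selection functions $s \to t$, applying its defining contravariant condition at the utility function $k'$ gives $x \in s(k'\alpha)$. It then remains to repackage this as membership in $\T_{\epsilon}s$ at the utility function $k\alpha$: by the definition of $\T_{\epsilon}$ it suffices to exhibit some $g \sim_{\epsilon} k\alpha$ with $x \in s\,g$, and I take $g = k'\alpha$, which already satisfies $x \in s\,g$.

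The only nontrivial point is that $k'\alpha \sim_{\epsilon} k\alpha$, and this is exactly where shortness enters: it is precisely the content of the preceding lemma applied to the short lens map $\alpha$ and the pair $k \sim_{\epsilon} k'$. This is the single obstacle in the argument, and it is the reason the hypothesis restricts to short lens maps — without the non-expansiveness of each $\alpha(x,-)$, the perturbation bound $\epsilon$ could be inflated when back-propagated along $\alpha$, so that $k'\alpha$ escapes the $\epsilon$-ball around $k\alpha$ and the witness fails.

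Finally, I would observe that the argument depends essentially on the contravariant orientation of the map condition. The existential witness $k'$ produced by $\T_{\epsilon}t$ on the hypothesis side is fed directly to the universally-quantified map condition and then reused as the existential witness for $\T_{\epsilon}s$ on the conclusion side. Under the covariant alternative the hypothesis would instead supply an arbitrary perturbation $g$ of $k\alpha$, not of the form $k'\alpha$, to which the map condition could not be applied — which is exactly why that variant fails to yield a functorial theory, as remarked earlier.
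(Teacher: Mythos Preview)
Your proof is correct and follows essentially the same route as the paper: unfold the hypothesis to obtain a witness $k' \sim_{\epsilon} k$ with $\alpha x \in t k'$, apply the map condition to get $x \in s(k'\alpha)$, and then invoke the short-lens lemma to conclude $k'\alpha \sim_{\epsilon} k\alpha$ and hence $x \in (\T_{\epsilon}s)(k\alpha)$. Your additional remarks on why the contravariant orientation is essential are a nice gloss but go beyond what the paper records.
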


\begin{proof}
    Consider $x \in X$ and $k:X' \to V'$. We need to show if $\alpha x \in (\T_{\epsilon}t) k$, then 
    $x \in (\T_{\epsilon}s) (k\alpha)$. The assumption implies there is a $k_1:X' \to V'$ such that $k \sim_{\epsilon} k_1$ and $\alpha x \in t k_1$. As $\alpha:s \to t$, this means $x \in s (k_1\alpha)$.
    By the previous lemma, $k\alpha \sim_{\epsilon}k_1\alpha$ and so $x \in \T_{\epsilon}s (k\alpha)$ as required. This shows that any map $\alpha:s \to t$ induces a map $\alpha:T_{\epsilon} s \to T_{\epsilon} t$. Preservation of composition and identities is now trivial.
\end{proof}

\noindent Next we consider the role/structure of the parameter $\epsilon$ in approximation. The following graded structure is what one expects in many quantitative situations.

\begin{lemma}
\label{lem:graded}
The following are easily provable
\begin{itemize}
    \item $\T_0 G = G$
    \item If $\epsilon \leq \epsilon'$, then $\T_{\epsilon} G \subseteq  \T_{\epsilon'} G$
    \item $\T_{\epsilon} (\T_{\delta }G) \subseteq  \T_{\epsilon + \delta} G$
\end{itemize}
\end{lemma}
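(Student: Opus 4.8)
The plan is to verify each of the three graded-structure properties directly from the defining equivalence
\[
x \in (\T_{\epsilon} G) k \;\; \msf{iff} \;\; (\exists k' \sim_{\epsilon} k) \;\; x \in G k',
\]
unfolding the meaning of $\sim_{\epsilon}$ as $d(k',k) \leq \epsilon$ in the sup-metric, and comparing the witnessing utility functions. Since each claim is an inclusion (or equality) of selection functions, it suffices to fix an arbitrary utility function $k$ and an arbitrary action $x$, and reason about membership in the relevant image sets.

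For the first bullet, $\T_0 G = G$, I would observe that $d(k',k) \leq 0$ forces $k' = k$ (since $d$ is a genuine metric with $d(k',k)=0$ iff $k'=k$), so the existential quantifier collapses to the single witness $k' = k$, giving $x \in (\T_0 G)k$ iff $x \in Gk$. For the second bullet, monotonicity in $\epsilon$, I would take $x \in (\T_{\epsilon}G)k$ with witness $k' \sim_{\epsilon} k$; since $\epsilon \leq \epsilon'$ we immediately get $d(k',k) \leq \epsilon \leq \epsilon'$, so the very same $k'$ witnesses $x \in (\T_{\epsilon'}G)k$. Both of these are essentially one-line unfoldings.

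The third bullet, $\T_{\epsilon}(\T_{\delta}G) \subseteq \T_{\epsilon+\delta}G$, is the only one requiring the triangle inequality and is where the real content lies. Suppose $x \in \T_{\epsilon}(\T_{\delta}G)k$. Unfolding the outer $\T_{\epsilon}$ yields a witness $k_1$ with $k_1 \sim_{\epsilon} k$ and $x \in (\T_{\delta}G)k_1$; unfolding the inner $\T_{\delta}$ yields a further witness $k_2$ with $k_2 \sim_{\delta} k_1$ and $x \in G k_2$. The goal is to produce a single witness for membership in $\T_{\epsilon+\delta}G$, and the natural candidate is $k_2$ itself. By the triangle inequality in the sup-metric, $d(k_2, k) \leq d(k_2, k_1) + d(k_1, k) \leq \delta + \epsilon$, so $k_2 \sim_{\epsilon+\delta} k$, and combined with $x \in G k_2$ this gives $x \in \T_{\epsilon+\delta}G \, k$ as required.

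The main (and genuinely only) obstacle is the third part, and even there the difficulty is merely to remember to compose the two intermediate witnesses via the triangle inequality rather than attempting to construct a new utility function; once one sees that the inner witness $k_2$ already serves as the global witness, the proof is immediate. It is worth noting that the third inclusion need not be an equality: there is no reason a single $k' \sim_{\epsilon+\delta} k$ with $x \in G k'$ should factor through an intermediate $\T_{\delta}$-approximation, which is exactly why the statement is phrased as $\subseteq$.
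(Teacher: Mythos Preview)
Your proof is correct and is precisely the intended straightforward unfolding; the paper's own proof consists solely of the word ``Straightforward''. Your more detailed argument---in particular the explicit use of $d(k',k)\leq 0 \Rightarrow k'=k$ for the first bullet and the triangle-inequality chaining of witnesses $k_2,k_1,k$ for the third---is exactly what that word is gesturing at.
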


\begin{proof}
Straightforward
\end{proof}


\subsection{Compositional Approximation}
In this subsection we discuss the relationship between the compositional structure of selection functions and approximation of selection functions. To do this we first prove a small lemma

\begin{lemma}
Let $k, k': X\times X' \to V \times V'$ be functions such that $k \sim_{\epsilon} k'$. Then for every $x$ and
$x'$, 
\[
\pi_0k(-,x') \;\; \sim_{\epsilon} \;\; \pi_0k'(-,x') \;\; \mbox{ and } \;\;
\pi_1k(x,-) \;\; \sim_{\epsilon} \;\; \pi_1k'(x,-)
\]
\end{lemma}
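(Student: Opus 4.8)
The plan is to unfold the two layers of structure hidden in the hypothesis $k \sim_{\epsilon} k'$ and then observe that each conclusion follows by the stability of this estimate under restricting an argument and postcomposing with a projection. Recall that $k \sim_{\epsilon} k'$ means $d(k,k') \leq \epsilon$ for the sup-metric on $[X \times X', V \tensor V']$, so that
\[
\sup_{(x,x')} d_{V \tensor V'}\big(k(x,x'), k'(x,x')\big) \;\leq\; \epsilon,
\]
while, by the definition of the tensor of metric spaces, the codomain distance $d_{V \tensor V'}$ is the maximum of the two coordinate distances.

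First I would fix $x'$ together with an arbitrary $x$ and estimate pointwise. Since the projection $\pi_0$ is short for the max-metric --- indeed $d(\pi_0 v, \pi_0 w) \leq \msf{max}\{d(\pi_0 v, \pi_0 w), d'(\pi_1 v, \pi_1 w)\} = d_{V \tensor V'}(v,w)$ --- we obtain
\[
d\big(\pi_0 k(x,x'), \pi_0 k'(x,x')\big) \;\leq\; d_{V \tensor V'}\big(k(x,x'), k'(x,x')\big) \;\leq\; \epsilon,
\]
where the final inequality holds because the middle quantity is one term of the supremum over all pairs in $X \times X'$, hence bounded by it.

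The last step is to take the supremum over $x \in X$ with $x'$ held fixed. This is a supremum over a subfamily of the index set $X \times X'$, so it too is bounded by $\epsilon$, which is precisely $\pi_0 k(-,x') \sim_{\epsilon} \pi_0 k'(-,x')$. The statement for $\pi_1 k(x,-)$ is entirely symmetric: one uses that $\pi_1$ is short and holds $x$ fixed while letting the supremum range over $x' \in X'$. I do not anticipate a genuine obstacle here, since both facts --- that projections are short for the max-metric and that a supremum over a subfamily is dominated by the global supremum --- are immediate; the only point requiring a little care is the bookkeeping that fixing one coordinate indexes exactly such a subfamily, which is what lets the single hypothesis split into the two separate conclusions.
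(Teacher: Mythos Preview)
Your argument is correct and is precisely the ``straightforward calculation'' the paper alludes to: you unfold the sup-metric, use that each $\pi_i$ is short for the max-metric on $V \tensor V'$, and note that fixing one coordinate yields a supremum over a subfamily bounded by the global one. The paper's own proof gives no further detail, so your proposal simply spells out what the paper leaves implicit.
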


\begin{proof}
Straightforward calculation
\end{proof}

\noindent Now, consider $G \otimes H$, the Nash product of selection functions $G$ and $H$. What do its approximate equilibria look like? Is there any structure here or does it just consist of a fuzzy ball of extra equilibria enveloping $G$ and $H$? The following shows that structure is preserved, ie that an approximate equilibrium of a Nash product is infact an equilibrium of a Nash product.

\begin{lemma}
    Let $G:\Sel(X,V)$ and $H:\Sel(X',Y')$ be selection functions. Then
    \[
    \T_{\epsilon} (G \otimes H) \subseteq  \T_{\epsilon} G \otimes T_{\epsilon} H
    \]
\end{lemma}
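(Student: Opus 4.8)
The plan is to unfold both sides of the inclusion against their definitions and to discharge the $\epsilon$-closeness obligations using the projection lemma just established. First I would take $(x,x') \in \T_{\epsilon}(G \otimes H)k$ for an arbitrary utility function $k : X \times X' \to V \times V'$. By the definition of $\T_{\epsilon}$ this yields a \emph{single} witness $k' \sim_{\epsilon} k$ with $(x,x') \in (G \otimes H)k'$, and unfolding the Nash product gives the two conjuncts $x \in G(\pi_0 k'(-,x'))$ and $x' \in H(\pi_1 k'(x,-))$.

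Next I would unfold the target membership $(x,x') \in (\T_{\epsilon} G \otimes \T_{\epsilon} H)k$, which by the definition of the Nash product reduces to showing $x \in (\T_{\epsilon} G)(\pi_0 k(-,x'))$ and $x' \in (\T_{\epsilon} H)(\pi_1 k(x,-))$. For the first conjunct the natural witness is the projection $\pi_0 k'(-,x')$ of the global witness $k'$: we already know $x \in G(\pi_0 k'(-,x'))$, so the only remaining obligation is $\pi_0 k'(-,x') \sim_{\epsilon} \pi_0 k(-,x')$, which is exactly the content of the preceding projection lemma applied to $k \sim_{\epsilon} k'$. The second conjunct is symmetric, using $\pi_1 k'(x,-)$ as witness and the other half of the projection lemma, after which the defining conjunction of $\T_{\epsilon} G \otimes \T_{\epsilon} H$ is satisfied.

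The mild subtlety --- rather than a genuine obstacle --- is that $\T_{\epsilon}(G \otimes H)$ supplies one joint perturbation $k'$ of the composite utility function, whereas $\T_{\epsilon} G \otimes \T_{\epsilon} H$ permits the two components to be perturbed independently; the argument works precisely because projecting a single global witness still lands within $\epsilon$ in each fibre, so one global witness always suffices to furnish both local witnesses. This asymmetry is also why one should only expect an inclusion and not an equality: independently chosen perturbations of the two component utility functions need not arise as the projections of any one perturbation of the joint utility function, so the reverse containment can fail.
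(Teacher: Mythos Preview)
Your proof is correct and follows exactly the same route as the paper: pick an arbitrary $(x,x')$ and $k$, unpack $\T_{\epsilon}(G\otimes H)$ to obtain a single global witness $k'\sim_{\epsilon}k$ with $(x,x')\in (G\otimes H)k'$, split via the Nash product, and then use the projection lemma to turn the projected slices of $k'$ into the required $\epsilon$-close witnesses for $\T_{\epsilon}G$ and $\T_{\epsilon}H$ separately. Your closing remark that the reverse inclusion ``can fail'' is a reasonable heuristic, but note that the paper goes on to show that under a decidable equality on $X$ and $X'$ one can in fact glue two independent local perturbations back into a single global one, so the containment becomes an equality in that setting.
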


\begin{proof}
    Let $(x,x') \in X \times X'$ and $k:X\times X' \to V\times V'$. Further assume $(x,x') \in \T_{\epsilon}(G \otimes H)\; k$. Then there is a $k':X\times X' \to V \times V'$ such that $k \sim_{\epsilon} k'$ and $(x,x') \in (G \otimes H)\; k'$. By definition of $G \otimes H$, this means 
\[
x \in G \; (\pi_0 k'(-,x')) \;\; \mbox{ and } \;\; 
x' \in H \; (\pi_1 k'(x,-))
\]
By the previous lemma
\[
x \in \T_{\epsilon}(G) \; (\pi_0 k(-,x')) \;\; \mbox{ and } \;\; 
x' \in \T_{\epsilon}(H) \; (\pi_1 k(x,-))
\]
and hence $(x,x') \in (\T_{\epsilon}G \; \otimes \;\T_{\epsilon}H)\; k$. Thus
$T_{\epsilon} (G \otimes H) \subseteq  \T_{\epsilon} G \otimes T_{\epsilon} H$ as required.
    \end{proof}

\noindent The above lemma states that the approximation of a Nash product is itself a Nash product.  Noting the contravariance in our notion of morphism of selection functions, we can summarise the above as:

\begin{lemma}
\label{lem:lax-monoidal}
    The functor $\T_{\epsilon}:\Sel_s \to \Sel_s$ is lax monoidal functor
\end{lemma}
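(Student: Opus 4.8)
The plan is to assemble the functoriality already established together with the inclusion of the previous lemma into the data of a lax monoidal functor, the essential trick being to read everything through the contravariant convention for morphisms of selection functions. First I would make precise that, for two selection functions $s$ and $t$ over the same lens $(X,V)$, a morphism $s \to t$ whose underlying lens map is the identity is exactly a pointwise inclusion $t \subseteq s$: unfolding the defining condition $\alpha x \in t k \Rightarrow x \in s(k\alpha)$ with $\alpha$ the identity gives precisely $x \in tk \Rightarrow x \in sk$. Hence the containment $\T_{\epsilon}(G \otimes H) \subseteq \T_{\epsilon}G \otimes \T_{\epsilon}H$ of the previous lemma is, verbatim, a morphism
\[
\mu_{G,H} : \T_{\epsilon}G \otimes \T_{\epsilon}H \to \T_{\epsilon}(G \otimes H)
\]
in $\Sel_s$, carried by the identity lens on $(X \times X', V \times V')$. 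This is the laxity (multiplication) natural transformation, and the contravariance is exactly what turns an inclusion pointing one way into a structure morphism pointing the other.

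Second I would supply the unit constraint. The monoidal unit of $\Sel_s$ sits over the lens unit $I = (1,1)$; because its value type is the one-element set, the sup-metric forces $k' \sim_{\epsilon} k$ to mean $k' = k$, so $\T_{\epsilon}$ fixes this unit, $\T_{\epsilon} I = I$. The required constraint $I \to \T_{\epsilon} I$ is then simply the identity morphism, carried by the identity lens (and legitimate because $\T_{\epsilon} I \subseteq I$ holds trivially).

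Third I would verify naturality of $\mu$ and the three coherence diagrams. The simplifying observation is that the assignment $\Sel_s \to \Lens_s$ forgetting a morphism to its underlying lens map is \emph{faithful}: the additional datum of a morphism of selection functions is a property (the implication), not structure, so two parallel morphisms with equal underlying lens maps coincide. Consequently naturality of $\mu$ and every lax-monoidal coherence equation hold in $\Sel_s$ as soon as the corresponding equalities of underlying lens maps hold in $\Lens_s$, and those hold because $\Lens_s$ is symmetric monoidal with the product tensor and this tensor is functorial. What remains is only to confirm that each edge of the naturality square and of the coherence diagrams genuinely \emph{is} a morphism of selection functions, i.e. that the requisite containment is satisfied; for the structural edges (associators, unitors) this is immediate, and for $\mu$ it is exactly the previous lemma, with the functoriality lemma for $\T_{\epsilon}$ used to lift the underlying lens morphisms to $\Sel_s$.

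I expect no genuine obstacle, only the subtlety of bookkeeping the contravariant direction consistently: one must track throughout that inclusions of selection functions correspond to morphisms in the opposite sense, so that $\mu_{G,H}$ indeed points from $\T_{\epsilon}G \otimes \T_{\epsilon}H$ into $\T_{\epsilon}(G \otimes H)$ and not the reverse. All the substantive mathematical content — that an approximate equilibrium of a Nash product decomposes into approximate equilibria of the factors — is already carried by the previous lemma; the present statement is then the formal repackaging that this single inclusion, together with functoriality and the strictness of $\T_{\epsilon}$ on the unit, satisfies the lax monoidal axioms.
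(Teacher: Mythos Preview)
Your proposal is correct and follows the same approach as the paper: the paper simply remarks that the preceding inclusion $\T_{\epsilon}(G \otimes H) \subseteq \T_{\epsilon}G \otimes \T_{\epsilon}H$, read through the contravariant definition of morphisms of selection functions, \emph{is} the lax monoidal structure, and you spell this out in detail. Your additional care with the unit constraint and the faithfulness argument for coherence is a genuine elaboration of points the paper leaves entirely implicit, but the underlying idea is identical.
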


\noindent However, it would be nice if the reverse were true as then we could {\em find} and approximate equilibrium for a Nash equilibria by merely approximating the components. Not suprisingly, the ability to take this extra step depends on the existence of sufficient utility functions as we saw in lemma~\ref{lem:approx-em} which in turn rests on
the decidability of equality on actions.  The key lemma is the following which uses the decidable equality to ensure the function $k'$ below is well defined.

\begin{lemma} 
\label{lem:dec-monoidal}
Let $X,X'$ have decidable equality. Further, let $k:X \times X' \to V \times V'$, $x \in X$ and $x' \in X'$. Even further, let $g \sim_{\epsilon} \pi_0 k(-,x')$ and $h\sim_{\epsilon} \pi_1 k(x,-)$. Then
    there is a $k':X\times X' \to V \times V'$ such that
    \begin{itemize}
        \item $k' \sim_{\epsilon} k$
        \item $g = \pi_0 k'(-,x')$; and 
        \item $h = \pi_1 k'(x,-)$.
    \end{itemize}
\end{lemma}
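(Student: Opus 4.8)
The plan is to build $k'$ by \emph{splicing} the given approximations $g$ and $h$ into $k$ along the two relevant slices, leaving $k$ untouched everywhere else. Writing $k$ in components as $k=(k_0,k_1)$ with $k_0:X\times X' \to V$ and $k_1:X\times X'\to V'$, the key observation is that $g$ only needs to govern the \emph{first} component along the slice $\{(-,x')\}$, while $h$ only needs to govern the \emph{second} component along the slice $\{(x,-)\}$. Since these two requirements concern disjoint components of the codomain $V\times V'$, they never interfere, and the product metric $\max\{d_V,d_{V'}\}$ will let us treat them independently.

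First I would define $k'=(k'_0,k'_1)$ by
\[
k'_0(a,b)=
\begin{cases}
g(a) & \text{if } b = x',\\
k_0(a,b) & \text{if } b \neq x',
\end{cases}
\qquad
k'_1(a,b)=
\begin{cases}
h(b) & \text{if } a = x,\\
k_1(a,b) & \text{if } a \neq x.
\end{cases}
\]
Here the decidable equality on $X$ and $X'$ is exactly what makes these case distinctions define genuine functions, which is the role flagged in the statement. The second and third bullet points are then immediate: setting $b=x'$ yields $\pi_0 k'(-,x')=g$, and setting $a=x$ yields $\pi_1 k'(x,-)=h$, straight from the definitions.

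For the first bullet, $k'\sim_{\epsilon} k$, I would unfold the sup-metric on $X\times X'\to V\times V'$ using the product metric, giving
\[
d(k',k)=\sup_{(a,b)}\ \max\bigl\{\, d_V(k'_0(a,b),k_0(a,b)),\ d_{V'}(k'_1(a,b),k_1(a,b))\,\bigr\}.
\]
A case analysis on the four combinations of $b=x'$ versus $b\neq x'$ and $a=x$ versus $a\neq x$ then finishes it: wherever a component is left equal to $k$, its contribution is $0$; wherever it is overridden, the hypotheses $g\sim_{\epsilon}\pi_0 k(-,x')$ and $h\sim_{\epsilon}\pi_1 k(x,-)$ bound it pointwise by $\epsilon$. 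In each case the inner $\max$ is at most $\epsilon$, so the supremum is at most $\epsilon$.

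The main subtlety — and the one point I would check with care — is the corner $(a,b)=(x,x')$, where both slices meet and both components are overridden simultaneously to $(g(x),h(x'))$. One must confirm this does not break the bound: the first component is controlled by evaluating $g\sim_{\epsilon}\pi_0 k(-,x')$ at $a=x$, and the second by evaluating $h\sim_{\epsilon}\pi_1 k(x,-)$ at $b=x'$. Because the two overrides act on orthogonal components, these two estimates apply independently, and the $\max$ at this point remains at most $\epsilon$. This is the only place the construction could conceivably fail, and it is precisely where choosing $\max$ (rather than, say, a sum) as the product metric keeps the argument clean.
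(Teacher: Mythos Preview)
Your construction is correct and is exactly the paper's: writing $k'$ componentwise as $(k'_0,k'_1)$ with the two separate case splits is just a repackaging of the paper's four-case definition (your $k'_0(a,b)$ and $k'_1(a,b)$ reproduce, case by case, the paper's values $k(z,z')$, $(\pi_0 k(x,z'),hz')$, $(gz,\pi_1 k(z,x'))$, and $(gx,hx')$). The paper simply states ``the required properties of $k'$ are now easy to verify'', whereas you actually carry out the verification, including the key corner $(x,x')$; so your write-up is the same approach, just more explicit.
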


\begin{proof}
    Define $k':X \times X' \to V \times V'$ as follows
    \begin{eqnarray*}
        k'(z,z') & = & k(z,z')  \hspace{0.8in} \mbox{if} \;\; z \neq x, z' \neq x' \\
        k'(x,z') &= & (\pi_0k(x,z'), hz')  \hspace{0.3in} \mbox{if} \;\;  z' \neq x' \\
        k'(z,x') & = & (gz, \pi_1k(z,x'))  \hspace{0.3in} \mbox{if} \;\; z \neq x \\
        k'(x,x') & = & (gx,hx') \hspace{0.7in}\mbox{otw} 
     \end{eqnarray*}
     The required properties of $k'$ are now easy to verify.
\end{proof}

\noindent Now we can prove the reverse containment

\begin{lemma}
    Let $X,X'$ have decidable equality. Let $G:\Sel(X,V)$ and $H:\Sel(X',Y')$ be selection functions. Then
    \[
    \T_{\epsilon} G \otimes T_{\epsilon} H \subseteq \T_{\epsilon} (G \otimes H) 
    \]
\end{lemma}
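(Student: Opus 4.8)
The plan is to unfold both sides against the definition of the Nash product and then invoke Lemma~\ref{lem:dec-monoidal} to glue the two independently chosen witness functions into a single perturbation of $k$. So first I would fix $k:X \times X' \to V \times V'$ and $(x,x') \in X \times X'$, and assume $(x,x') \in (\T_{\epsilon}G \otimes \T_{\epsilon}H)\,k$. By the definition of the Nash product this says exactly that $x \in \T_{\epsilon}G\,(\pi_0 k(-,x'))$ and $x' \in \T_{\epsilon}H\,(\pi_1 k(x,-))$.

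Next I would unfold each of these two membership statements through the definition of $\T_{\epsilon}$. The first yields a witness $g:X \to V$ with $g \sim_{\epsilon} \pi_0 k(-,x')$ and $x \in G\,g$; the second yields a witness $h:X' \to V'$ with $h \sim_{\epsilon} \pi_1 k(x,-)$ and $x' \in H\,h$. At this point the two halves of the desired equilibrium have each been certified, but only against their own separately perturbed utility functions $g$ and $h$ living on the two slices.

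The crucial step — and the one where the argument could fail — is that $g$ and $h$ are chosen independently on the two separate slices, so there is no reason a priori that they should arise as the slices of one single function $k'$ that is globally $\epsilon$-close to $k$. This mismatch is precisely the obstruction that Lemma~\ref{lem:dec-monoidal} removes: applying it to $k$, $x$, $x'$, $g$, $h$ produces a $k':X \times X' \to V \times V'$ with $k' \sim_{\epsilon} k$, $g = \pi_0 k'(-,x')$ and $h = \pi_1 k'(x,-)$. Decidable equality on $X$ and $X'$ is exactly what makes this gluing well defined, since $k'$ is specified by a case split on whether its arguments equal $x$ and $x'$.

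Finally I would substitute $k'$ back in. Since $x \in G\,g = G\,(\pi_0 k'(-,x'))$ and $x' \in H\,h = H\,(\pi_1 k'(x,-))$, the definition of the Nash product gives $(x,x') \in (G \otimes H)\,k'$; and as $k' \sim_{\epsilon} k$, the definition of $\T_{\epsilon}$ then yields $(x,x') \in \T_{\epsilon}(G \otimes H)\,k$, establishing the containment. I expect everything apart from the invocation of Lemma~\ref{lem:dec-monoidal} to be purely routine unfolding of definitions; the single genuine idea is the recognition that the reverse inclusion is blocked exactly by the gluing problem that lemma solves.
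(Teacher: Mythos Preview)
Your proof is correct and follows exactly the same approach as the paper: unfold the Nash product and $\T_{\epsilon}$ to obtain the slice witnesses $g$ and $h$, invoke Lemma~\ref{lem:dec-monoidal} to glue them into a single $k' \sim_{\epsilon} k$, and then repackage via the definitions. Your identification of the gluing step as the only non-routine point, and of decidable equality as what makes it go through, matches the paper's reasoning precisely.
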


\begin{proof}
    Let $(x,x') \in X \times X'$ and $k:X\times X' \to V \times V'$ be such that $(x,x') \in (\T_{\epsilon} G \otimes T_{\epsilon} H) k$. Then 
    \[
x \in (\T_{\epsilon} G) (\pi_0k(-,x')) \;\; \mbox{ and } \;\; 
x' \in (\T_{\epsilon} H) (\pi_1k(x,-))
    \]
    Thus there is a function $g:X \to V$ and one 
    $h:X' \to V'$ such that 
    \[
x \in G g \;\; \mbox{ and } \;\; x' \in H h
\;\; \mbox{ and } \;\;
g \sim_{\epsilon} \pi_0k(-,x')
\;\; \mbox{ and } \;\; 
h \sim_{\epsilon} \pi_1k(x,-)
    \]
    By lemma~\ref{lem:dec-monoidal}, there is a map $k':X\times X' \to V \times V'$ such that 
    \[
 x \in G (\pi_0k'(-,x')) \;\; \mbox{ and } \;\; 
x' \in H (\pi_1k'(x,-))
\;\; \mbox{and} \;\;
k' \sim_{\epsilon} k
    \]
Thus $(x,x') \in (G \otimes H) k'$ and hence 
$(x,x') \in \T_{\epsilon}(G \otimes H) k$ 
    \end{proof}

\noindent This is a very strong --- and surprising result --- that for action sets with decidable equality, approximation is not merely lax monoidal, but actually monoidal. This we did not expect and, indeed, when it comes to sequential composition of open games, we can only show one containment. Sequential composition of games leads us from selection functions to Open Games.

\section{Approximation and Open Games}
Selection functions form a simple model of compositional game theory via the Nash product/parallel composition of selection functions. However, what about the sequential composition of selection functions --- 
after all parallel and sequential composition are intimately bound in monoidal category theory, string diagrams etc. Sequential composition of selection functions has been considered~\cite{escardo-oliva} but the utility function of the first game is modelled exogenously ... what happens in practice is that the utility function of the first game is actually computed endogenously. Open Games~\cite{julesGames} address this problem perfectly. For the price of a little more algebraic structure, they form a compositional model of game theory which includes a number of operators for compositionally building large complex games from smaller and simpler games. Chief amongst these operators are sequential composition but also choice operators, and operators for iterated game theory~\cite{GhaniKLF18}. This section generalises approximation to open games.

\subsection{Open Games}

The key concept which introduced open games is the following:
\begin{definition}[Open Game]
Let $X, Y, R$ and $S$ be sets. An open game
$G = (\Sigma_G,P_G,C_G,E_G) : (X,S) \to (Y,R)$ consists of:
\begin{itemize}
    \item a set $\Sigma_G$ of strategy profiles,
    \item a play function $P_G : \Sigma_G\to (X \to Y )$,
    \item a coutility function $C_G : \Sigma_G \to (X\times R \to S)$, and
  \item an equilibrium function $E_G :X \times (Y \to R) \to \P(\Sigma_G)$. 
\end{itemize}

\end{definition}
\noindent Intuitively, the set $X$ contains the states of the game, $Y$ the moves, $R$ the utilities and $S$ the coutilities. The set $\Sigma_G$ contains the strategies we are trying to pick an optimal one from. The play function $P_G$ selects a move given a strategy and a state, while the coutility function $C_G$ computes the coutility extruded from the game, given a strategy, state and utility. Finally, if $\sigma \in E_G \; x \; k$, then $\sigma$ is an optimal strategy in state $x$ and with utility given by $k : Y \to R$. Just as lenses systematise the presentation of selection functions, they also systematise the presentation of open games as can be found in~\cite{julesPhD,julesGames}. 

\begin{lemma}
    An open game $G : (X, S) \to (Y, R)$ is given by a set $\Sigma$ of strategies and, for each $\sigma \in \Sigma$: i) a lens $G_\sigma:(X,S) \to (Y,R)$; and ii)
    a predicate
    \[
E_{\sigma} \subseteq X \times (Y \to R)
    \]
\end{lemma}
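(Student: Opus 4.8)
The plan is to exhibit a bijection between the data of an open game as originally defined and the data described in the statement, repackaging each component using standard isomorphisms in $\msf{Set}$. Since both descriptions consist of pure data (there are no equational side-conditions to respect), it suffices to match the two presentations up componentwise, keeping the strategy set $\Sigma$ fixed throughout.

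First I would treat the play and coutility functions together. By the definition of a lens over sets, a lens $(X,S) \to (Y,R)$ is precisely a pair consisting of a function $X \to Y$ and a function $X \times R \to S$. Hence specifying, for each $\sigma \in \Sigma$, a lens $G_\sigma : (X,S) \to (Y,R)$ is the same as specifying two families $\big(\sigma \mapsto (G_\sigma)_0\big) \in \Sigma \to (X \to Y)$ and $\big(\sigma \mapsto (G_\sigma)_1\big) \in \Sigma \to (X \times R \to S)$. These two families are exactly the play function $P_G$ and the coutility function $C_G$; conversely, setting $G_\sigma := (P_G\,\sigma,\, C_G\,\sigma)$ recovers the lens. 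This gives a bijection between the pair $(P_G, C_G)$ and the indexed family $(G_\sigma)_{\sigma \in \Sigma}$.

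Next I would treat the equilibrium function via the powerset/relation correspondence. Because $\P(\Sigma) \cong 2^{\Sigma}$, a function $E_G : X \times (Y \to R) \to \P(\Sigma)$ is the same thing as a relation on $\big(X \times (Y \to R)\big) \times \Sigma$, i.e.\ a $\Sigma$-indexed family of predicates on $X \times (Y \to R)$. Explicitly, I would set $E_\sigma = \{(x,k) \mid \sigma \in E_G(x,k)\} \subseteq X \times (Y \to R)$, with inverse assignment $E_G(x,k) = \{\sigma \mid (x,k) \in E_\sigma\}$. These two assignments are mutually inverse, giving a bijection between $E_G$ and the family $(E_\sigma)_{\sigma \in \Sigma}$.

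Combining the two bijections, both of which leave $\Sigma$ untouched, identifies the quadruple $(\Sigma, P_G, C_G, E_G)$ with the datum of a set $\Sigma$ equipped with a $\Sigma$-indexed family of lenses $G_\sigma$ together with predicates $E_\sigma$, as claimed. There is no substantial obstacle here: the only point requiring care is to confirm that the two repackagings are inverse on the nose and that they are indexed by one and the same strategy set $\Sigma$, so that a lens $G_\sigma$ and a predicate $E_\sigma$ are attached to the \emph{same} $\sigma$; the remainder is routine unfolding of definitions.
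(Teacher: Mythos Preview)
Your argument is correct: the repackaging of $(P_G,C_G)$ into a $\Sigma$-indexed family of lenses via currying, and of $E_G$ into a $\Sigma$-indexed family of predicates via the relation/powerset correspondence, is exactly the content of the lemma, and you have written out both directions cleanly.

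The only point of comparison is that the paper does not actually supply a proof of this lemma at all; it simply records it as well known, with citations to~\cite{julesGames,julesPhD}. So your proposal is not so much a different route as the route one would take if one chose to spell out what the paper leaves implicit.
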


\noindent We now restrict all lenses occurring in open games to be short. This is important not only in functoriality as before, but also for the compositional treatment of approximation for sequential games. Notice the similarity of the lens predicate $E_{\sigma}$
and a selection function. But notice also differences --- a selection function is defined over a pair of sets, ie an object of $\Lens$, while an open game is defined over a family of lens maps. There are other notions of open games, in particular Gavranovic Games ~\cite{cybernetics} which deepen the relationship with lenses by defining open games to be parameterised maps in the category of lenses. We expect our techniques apply to Gavranovic games in the obvious way.

\subsection{Approximation and Open Games}

The methodology developed for selection functions clearly generalises to open games as follows:

\begin{definition}
    Let $G:(X,S) \to (Y,R)$ be an open game with strategy set $\Sigma$ and equilibrium predicates $E_{\sigma}$ for $\sigma \in \Sigma$. Further let 
    $\epsilon \in \R$. Then define $\T_\epsilon G:(X,S) \to (Y,R)$ to be the open game with i) the same strategy set $\Sigma$; ii) lens structure 
    $(\T_\epsilon G)_{\sigma} = G_{\sigma}$; and iii) equilibrium given by 
    \[
x \in E_{\T_{\epsilon}G, \sigma} \; k 
\;\;\mbox{ iff} \;\; \;(\exists k' \sim_{\epsilon} k) \;\; x \in E_{G, \sigma} k'
\]    
\end{definition}

\noindent The similarity of approximation for open games with that for selection functions is obvious and highlights a strength of the methodology developed here --- it is likely to generalise to many other situations as discussed in the conclusions and future work. This similarity also means that lemma~\ref{lem:graded} carries over to the setting of open games. Next we consider functoriality of approximation for open games


\begin{definition}
    The category of open games $\msf{Op}$ has open games as objects, and a morphism from the open game $(\Sigma,G,E):(X,S) \to (Y,R)$ to the open game $(\Sigma', G', E'):(X',S') \to (Y',R')$ consists of a function $f: \Sigma \to \Sigma'$, short lens maps 
    $\alpha : (X,S) \to (X',S')$ and $\beta:(Y,R) \to (Y',R')$ such that 
        \[
        (\forall \sigma \in \Sigma)(\forall x\in X)(\forall k: Y' \to R')
        \;\; \alpha x \in E'_{f\sigma} k \;\; \mbox{implies} \;\;
        x \in E_{\sigma} (k . \beta)
        \]
\end{definition}
    
\noindent Notice how the contravariance in maps of selection functions is replicated here in maps of open games. This will be another hallmark of a general theory of approximation for systems consisting of utility indexed predicates. And again notice the use of the shortness of the lens maps $\alpha$ and $\beta$. The proof of the following lemma is the natural generalisation of the analogous one for selection functions.

\begin{lemma}
For $\epsilon \in \R$, the assignment sending an open game $G$ to an open game $\T_{\epsilon}G$ extends to a functor $\T_{\epsilon}:\msf{Op} \to \msf{Op}$ 
\end{lemma}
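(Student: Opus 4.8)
The plan is to mimic the selection-function functoriality argument, since $\T_\epsilon$ alters only the equilibrium predicates of an open game while leaving the strategy set $\Sigma$ and all lens data untouched. First I would declare the action on morphisms: a morphism $(f,\alpha,\beta):G \to H$ is sent by $\T_\epsilon$ to the \emph{same} triple $(f,\alpha,\beta)$, now regarded as a candidate morphism $\T_\epsilon G \to \T_\epsilon H$. Because the underlying data is literally unchanged, preservation of identities and composition will be immediate once well-definedness is established: composition and identities in $\msf{Op}$ are computed componentwise on $(f,\alpha,\beta)$, and $\T_\epsilon$ acts as the identity on these three components. So the entire content of the lemma reduces to a single well-definedness check.

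For that check, fix $\sigma \in \Sigma$, $x \in X$ and $k:Y' \to R'$, and assume $\alpha x \in E_{\T_\epsilon H, f\sigma}\, k$. Unfolding the definition of $\T_\epsilon$ on open games gives a witness $k_1 \sim_\epsilon k$ with $\alpha x \in E'_{f\sigma}\, k_1$, where $E'$ denotes the equilibrium of $H$. Applying the assumed morphism condition for $(f,\alpha,\beta):G\to H$ to the particular utility function $k_1$ yields $x \in E_\sigma\,(k_1 . \beta)$. What remains is to exhibit a witness for $x \in E_{\T_\epsilon G, \sigma}\,(k . \beta)$, i.e. some $k_2 \sim_\epsilon k . \beta$ with $x \in E_\sigma\, k_2$.

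The natural choice is $k_2 = k_1 . \beta$, for which $x \in E_\sigma\, k_2$ already holds. The crux is then establishing $k_1 . \beta \sim_\epsilon k . \beta$, and this is precisely where shortness enters: since $\beta$ is a short lens map and $k_1 \sim_\epsilon k$ as functions $Y' \to R'$, the earlier lemma on short-lens precomposition gives $k_1 . \beta \sim_\epsilon k . \beta$ as functions $Y \to R$, recalling that $k . \beta$ is the lens-composite of $k:(Y',R') \to I$ with $\beta:(Y,R)\to(Y',R')$. This completes the well-definedness and hence, by the componentwise remark above, functoriality.

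I expect the main obstacle to be purely bookkeeping rather than conceptual: one must correctly identify that the relevant short map here is $\beta$, the lens map on the codomain side carrying the utility, rather than $\alpha$ (whose shortness plays no role in this particular verification but is part of the ambient definition of morphism), and line up the types so that the short-lens lemma applies verbatim with $\beta$ in place of the $\alpha$ appearing in its statement. Once the types are matched, the argument is a direct transcription of the selection-function functoriality proof.
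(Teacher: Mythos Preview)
Your proposal is correct and is exactly the approach the paper intends: the paper does not spell out a proof but simply remarks that it is ``the natural generalisation of the analogous one for selection functions,'' and your argument is precisely that generalisation, including the correct observation that here the short lens map $\beta$ (rather than $\alpha$) is the one needed to transport $k_1 \sim_\epsilon k$ to $k_1.\beta \sim_\epsilon k.\beta$.
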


\noindent The final part of the mathematical development of approximation for selection functions considered the monoidal product formalising Nash equilibria which showed that $\T_{\epsilon}(G \otimes H) \subseteq \T_{\epsilon}G \otimes \T_{\epsilon}H$, and that $\subseteq$ can be replaced by $=$ 
when moves are decidable. Again, this lemma generalises to open games --- see~\cite{julesGames} for exact the definition of
the Nash product for open games.

\begin{lemma}
    The functor $\T_{\epsilon}:\msf{Op} \to \msf{Op}$ is lax monoidal with respect to the Nash tensor on open games, and monoidal where the type of moves (inputs to utility) is decidable.
\end{lemma}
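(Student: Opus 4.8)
The plan is to reduce the entire statement to the behaviour of the equilibrium predicates, exactly as was done for selection functions. The functor $\T_{\epsilon}$ leaves the strategy set and every lens component $G_{\sigma}$ of an open game untouched, altering only the family of equilibrium predicates, and the Nash tensor on $\msf{Op}$ is the product on strategy sets together with the monoidal product of lenses, its equilibrium predicate splitting a continuation $k:Y\times Y' \to R\times R'$ into the two subgame continuations and demanding that each component be an equilibrium for its split: $(\sigma,\tau)$ is an equilibrium of $G\otimes H$ at $(x,x')$ and $k$ precisely when $x \in E_{G,\sigma}\,(\pi_0 k(-,P_H\tau x'))$ and $x' \in E_{H,\tau}\,(\pi_1 k(P_G\sigma x,-))$. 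Because $\T_{\epsilon}$ is the identity on strategies and on lenses, the candidate lax structure map $\mu:\T_{\epsilon}G \otimes \T_{\epsilon}H \to \T_{\epsilon}(G\otimes H)$ has identity strategy function and identity lens components, so by the contravariance built into $\msf{Op}$-morphisms, checking that $\mu$ is a morphism reduces to the predicate containment $\T_{\epsilon}(G\otimes H) \subseteq \T_{\epsilon}G \otimes \T_{\epsilon}H$; this is the precise analogue of how Lemma~\ref{lem:lax-monoidal} was packaged.

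For the lax direction I would take an $\epsilon$-equilibrium $(\sigma,\tau)$ of $G\otimes H$ at $(x,x')$ and $k$, so there is $k'\sim_{\epsilon}k$ that is an honest equilibrium of $G\otimes H$, giving $x\in E_{G,\sigma}\,(\pi_0 k'(-,P_H\tau x'))$ and $x'\in E_{H,\tau}\,(\pi_1 k'(P_G\sigma x,-))$. The decisive observation is that the plug-in moves $P_G\sigma x$ and $P_H\tau x'$ are fixed by strategies and states and are independent of the continuation, so the split of $k'$ and the split of $k$ share the same plug-in point and differ only through $k'$ versus $k$. The splitting lemma for the sup-metric proved for selection functions then yields $\pi_0 k'(-,P_H\tau x')\sim_{\epsilon}\pi_0 k(-,P_H\tau x')$ and the symmetric statement on the $H$ side, which exhibits $\sigma$ and $\tau$ as $\epsilon$-equilibria of $G$ and $H$ for the splits of $k$. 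This is exactly $(\sigma,\tau)\in\T_{\epsilon}G\otimes\T_{\epsilon}H$ at $k$, establishing the containment and hence the lax structure map.

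For the monoidal direction under decidability I would run the converse. From a componentwise $\epsilon$-equilibrium I obtain $g\sim_{\epsilon}\pi_0 k(-,P_H\tau x')$ with $x\in E_{G,\sigma}\,g$ and $h\sim_{\epsilon}\pi_1 k(P_G\sigma x,-)$ with $x'\in E_{H,\tau}\,h$, and I must reassemble $g$ and $h$ into a single $k'\sim_{\epsilon}k$ whose two splits at the plug-in moves recover $g$ and $h$. This is precisely Lemma~\ref{lem:dec-monoidal}, now applied with the move types $Y,Y'$ in the role of the action sets, the utility types $R,R'$ in the role of $V,V'$, and the plug-in moves $P_G\sigma x,P_H\tau x'$ in the role of the distinguished actions. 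The case-defined $k'$ of that lemma is well defined exactly when $Y$ and $Y'$ carry decidable equality, which is the hypothesis that the type of moves (the inputs to the continuation utility) is decidable. Feeding $k'$ back through the Nash equilibrium condition shows $(\sigma,\tau)$ is a genuine equilibrium of $G\otimes H$ at $k'$, hence an $\epsilon$-equilibrium at $k$, yielding the reverse containment and upgrading $\mu$ to an isomorphism.

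The main obstacle, and the point I would verify with care, is the independence of the plug-in moves from the continuation: the entire reduction to the selection-function lemmas rests on $P_G\sigma x$ and $P_H\tau x'$ being determined by strategies and states alone, so that the distinguished action in the splitting lemma and in Lemma~\ref{lem:dec-monoidal} stays genuinely fixed while $k$ is varied against $k'$ or while $g,h$ are reassembled. A secondary bookkeeping point is confirming that decidability is forced on the move types $Y,Y'$ rather than the state types $X,X'$, since it is the utility over $Y\times Y'$ that is rebuilt by cases; this is exactly what the phrase \emph{inputs to utility} records. Finally, the monoidal coherence axioms for $\mu$ need no further work: since $\T_{\epsilon}$ is the identity on strategies and on lenses, they collapse to the coherence already holding in $\Lens_s$.
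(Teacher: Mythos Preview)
Your proposal is correct and follows exactly the approach the paper intends: the paper's own proof is the single sentence ``the proof follows the same structure as Lemma~\ref{lem:lax-monoidal}'', and you have faithfully unpacked what that reduction looks like for open games, including the use of the splitting lemma and Lemma~\ref{lem:dec-monoidal} with the move types $Y,Y'$ in place of the action sets. Your explicit remark that the plug-in moves $P_G\sigma x$ and $P_H\tau x'$ are fixed by strategy and state and do not depend on the continuation is precisely the reason the selection-function argument transports verbatim, and your identification of decidability landing on the move types matches the parenthetical ``inputs to utility'' in the statement.
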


\begin{proof}
    The proof follows the same structure as Lemma~\ref{lem:lax-monoidal}.
\end{proof}

\noindent One of the great advantages of open games is they compositionally model the sequential composition of games. Indeed, if we were to form a category whose objects are pairs of sets and whose morphisms are open games, the sequential composition of open games would be the composition in that category.

\begin{definition}
    Let $(\Sigma, G, E) :(X,S) \to (Y,R)$ be an open game and $(\Omega, H, B) : (Y,R) \to (Z,T)$ be another open game. Their sequential composition is the open game $H\circ G : (X,S) \to (Z,T)$ defined by
    \begin{itemize}
        \item Strategies of $H\circ G$ are $\Sigma \times \Omega$
        \item The lens $(H\circ G)(\sigma,\tau)$ is defined to be the lens composition  $H\tau . G\sigma$
        \item The equilibirum predicate is defined by
        \[
x \in E_{HG, (\sigma,\tau)} (k:Z \to T) \;\;\; \mbox{iff} \;\;\; x \in E_{G, \sigma} (k . H\tau) \;\; \wedge \;\;
G_{\sigma} x \in E_{H,\tau} \; k
        \]
    \end{itemize}
\end{definition}

\noindent Just like with the Nash product, we can ask ourselves
the if the approximate equilibria of a sequential game has any structure, eg is it composed of equilibria for the sequential composition of two approximated games. Once more, the answer is yes!

\begin{lemma}
    Let $(\Sigma, G, E) :(X,S) \to (Y,R)$ and $(\Omega, H, B) : (Y,R) \to (Z,T)$ be open games. Then
    \[
x \; \in \; E_{\T_{\epsilon} (H \circ G), (\sigma, \tau)} \; (k : Z \to T) \;\; 
\mbox{implies} \;\;  
x \; \in \; E_{\T_{\epsilon}H \circ T_{\epsilon} G, (\sigma,\tau)} \; (k : Z \to T) 
    \]
    for any $x\in X$, $k:Z \to T$, $\sigma \in \Sigma$ and $\tau \in \Omega$
\end{lemma}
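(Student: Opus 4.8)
The plan is to unfold both sides of the implication down to the level of witnessing utility functions, and then build the two witnesses needed on the right from the single witness supplied on the left. So assume $x \in E_{\T_{\epsilon}(H \circ G), (\sigma,\tau)}\,k$. By the definition of $\T_{\epsilon}$ on a composite game this hands us a single $k' \sim_{\epsilon} k$ (with $k,k':Z \to T$) such that $x \in E_{H \circ G, (\sigma,\tau)}\,k'$, and unfolding the sequential-composition predicate at $k'$ yields the two facts
\[
x \in E_{G,\sigma}(k' . H\tau) \qquad \mbox{and} \qquad G_\sigma x \in E_{H,\tau}\,k'.
\]
Since $\T_{\epsilon}H$ and $\T_{\epsilon}G$ carry exactly the same lens data as $H$ and $G$, the composite $\T_{\epsilon}H \circ \T_{\epsilon}G$ again has lens $H\tau . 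G\sigma$, and its equilibrium predicate at $k$ demands precisely $x \in E_{\T_{\epsilon}G, \sigma}(k . H\tau)$ together with $G_\sigma x \in E_{\T_{\epsilon}H, \tau}\,k$. Thus it suffices to produce, for the $G$-component, some $g \sim_{\epsilon} (k . H\tau)$ with $x \in E_{G,\sigma}\,g$, and for the $H$-component some $h \sim_{\epsilon} k$ with $G_\sigma x \in E_{H,\tau}\,h$.

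The $H$-component is immediate: take $h = k'$, which by the two facts above satisfies both $k' \sim_{\epsilon} k$ and $G_\sigma x \in E_{H,\tau}\,k'$, whence $G_\sigma x \in E_{\T_{\epsilon}H,\tau}\,k$. For the $G$-component take $g = k' . H\tau$; then $x \in E_{G,\sigma}\,g$ is exactly the first fact, so the only remaining obligation is the bound $g \sim_{\epsilon} (k . H\tau)$, i.e.\ $(k' . H\tau) \sim_{\epsilon} (k . H\tau)$. This is where shortness enters. The lens $H\tau$ is short, since all lenses occurring in open games are assumed short, and $k' \sim_{\epsilon} k$, so the earlier lemma on short lens maps applies directly and gives $k'.H\tau \sim_{\epsilon} k.H\tau$ (using symmetry of the metric). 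Assembling the two witnesses yields $x \in E_{\T_{\epsilon}H \circ \T_{\epsilon}G, (\sigma,\tau)}\,k$, as required.

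The one genuinely load-bearing step is this $G$-component bound, and it is exactly what forces the non-expansiveness hypothesis on the backward part of $H\tau$: the perturbation $k'$ certifying the approximate composite equilibrium is measured in the outer utility type $(Z,T)$, but the first game sees it only after backpropagation through $H\tau$, so without shortness the $\epsilon$-budget could be inflated. I expect the reverse containment to fail in general precisely because the right-hand side selects its two witnesses $g$ and $h$ \emph{independently}, and there is no mechanism --- absent an analogue of the gluing construction of Lemma~\ref{lem:dec-monoidal} threaded back through $H\tau$ --- to reconcile them into a single $k'$ compatible with the sequential structure; this is consistent with the paper's remark that only one containment is available for sequential composition.
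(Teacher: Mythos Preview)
Your proof is correct and follows essentially the same route as the paper: unfold $\T_{\epsilon}$ to obtain a single $k' \sim_{\epsilon} k$, split the sequential-composition predicate at $k'$, and then use shortness of $H\tau$ to transport the bound back through to $k.H\tau$. Your version is slightly more explicit in naming the witnesses and in observing that the lens data of $\T_{\epsilon}H \circ \T_{\epsilon}G$ coincides with that of $H \circ G$, and your closing remarks on the failure of the reverse containment agree with the paper's own commentary.
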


\begin{proof}
 Assume $x\in X$, $k:Z \to T$, $\sigma \in \Sigma$ and $\tau \in \Omega$ and $x \; \in \; E_{\T_{\epsilon} (H \circ G), (\sigma, \tau)} \; (k : Z \to T)$. By definition of $\T_{\epsilon}$, we have a $k':Z \to T$ such that $k \sim_{\epsilon} k'$ and  $x \; \in \; E_{H \circ G, (\sigma, \tau)} \; (k' : Z \to T)$. By definition, this means 
$x \in E_{G, \sigma} (k' . H\tau)$ and $G_{\sigma} x \in E_{H,\tau} \; k'$. Since $H\tau$ is short, we have  
$k.H\tau \sim_{\epsilon} k'.H\tau$ and hence 
\[
x \in E_{\T_{\epsilon}G, \sigma} (k . H\tau) \;\; \mbox{ and } \;\;  G_{\sigma} x \in E_{\T_{\epsilon}H,\tau} \; k
\]
Thus $x \; \in \; E_{\T_{\epsilon}H \circ T_{\epsilon} G, (\sigma,\tau)} \; (k : Z \to T)$ 
\end{proof}

\noindent Unlike with the Nash product, it is unlikely we can prove the reverse direction as, given $x \in E_{\T_{\epsilon}G, \sigma} (k . H\tau)$, all we could conclude is there is a $k' \sim_{\epsilon} k . H\tau$ such that $x \in E_{G,\sigma} k'$ but vital and needed structure is lost.

\section{Metric Game Theory}
Apart from solving the concrete problem of when is a move  almost optimal in a selection function or open game, our Approximate Game Theory permits further theoretical developments. The one we sketch here we call {\em Metric Game Theory} which seeks to ask the question of how far apart are two games. This is important, eg in a compositional model we might want to know if we replace sub-games/components of one game with other sub-game/components, can we quantitatively measure the effect. In more detail, we might ask questions such as if $G \sim_{\epsilon} G'$, what can we say about the distance between $G \otimes H$ and $G' \otimes H$.
In this section we develop the basics of Metric Game Theory. We focus on selection functions as the theory seems to generalise to open games. We also only measure distances between selection functions over the same lens --- we expect we can use the fibrational structure of selection functions over lenses to enable the computation of  distances between selection functions over different lenses. But we leave a proper treatment of {\em Metric Game Theory} to a full paper as the subject clearly warrants. 

\begin{definition}
    Let $s,s' \in \Sel(X,V)$ be selection functions. We define  
     \[
    d(s,s') = (\msf{inf} \; \epsilon : \R) \;\; (s\subseteq T_{\epsilon} s' \;\; \wedge \;\; s'\subseteq T_{\epsilon} s)
    \]
\end{definition}
\noindent In the above, we are essentially using the Hausdorff distance on subsets. As a result, the above construction gives a pseudometric, ie there is no requirement that $d(x,y) = 0$ implies $x=y$. We list two obvious properties of the pseudometric on $\Sel(X,V)$ showing how distances between games are preserved by compositional structure. 

\begin{lemma}
    Assuming a decidable equality on $X,X'$, the following are true
    \begin{itemize}
        \item Let $s$ be a selection function. Then $d(s,\T_{\epsilon}s) \leq \epsilon$
        \item Let $s,s' \in Sel(X,V)$ and $t,t' \in \Sel(X',V')$. If $d(s,s') \leq \epsilon$ and $d(t,t') \leq \epsilon$, then $d(s \otimes t,s'\otimes t') \leq \epsilon$
        \end{itemize}
\end{lemma}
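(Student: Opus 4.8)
The plan is to treat the two bullets separately, reducing each to the graded structure of Lemma~\ref{lem:graded} together with the monoidal behaviour of $\T_\epsilon$ established earlier; the only delicate point is that $d$ is defined as an infimum, so containments hold only ``in the limit''. Throughout I would use the elementary fact that $G \subseteq \T_\epsilon G$ for every selection function $G$: taking $k' = k$ in the definition of $\T_\epsilon$ works since $d(k,k)=0\leq\epsilon$, so $k \sim_\epsilon k$.

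For the first bullet I would show that $\delta = \epsilon$ already satisfies the two conditions defining $d(s,\T_\epsilon s)$, so that the infimum is at most $\epsilon$. One condition, $\T_\epsilon s \subseteq \T_\epsilon s$, is reflexivity. For the other, $s \subseteq \T_\epsilon(\T_\epsilon s)$, I would apply the fact above twice, with $G = s$ and then $G = \T_\epsilon s$, giving $s \subseteq \T_\epsilon s \subseteq \T_\epsilon(\T_\epsilon s)$. Note this half needs no decidability hypothesis.

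For the second bullet I would set up two preliminary observations: (i) the Nash product is monotone for $\subseteq$, i.e.\ $s \subseteq \bar s$ and $t \subseteq \bar t$ imply $s \otimes t \subseteq \bar s \otimes \bar t$, which is immediate from the defining biconditional of $\otimes$; and (ii) the reverse containment $\T_\delta s' \otimes \T_\delta t' \subseteq \T_\delta(s' \otimes t')$, which is exactly the decidable-equality lemma proved earlier and is where the hypothesis on $X, X'$ is consumed. Chaining these yields, at any level $\delta$ where the component containments are available, $s \otimes t \subseteq \T_\delta s' \otimes \T_\delta t' \subseteq \T_\delta(s' \otimes t')$, and symmetrically $s' \otimes t' \subseteq \T_\delta(s \otimes t)$, which is precisely what $d(s \otimes t, s' \otimes t') \leq \delta$ demands.

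The main obstacle --- indeed the only subtlety --- is that $d(s,s') \leq \epsilon$ does not literally guarantee $s \subseteq \T_\epsilon s'$, since the infimum defining $d$ need not be attained. I would circumvent this by noting that the set $\{\delta : s \subseteq \T_\delta s' \wedge s' \subseteq \T_\delta s\}$ is upward closed, because $\T_\delta G \subseteq \T_{\delta'} G$ for $\delta \leq \delta'$ by Lemma~\ref{lem:graded}. Hence $d(s,s') \leq \epsilon$ delivers the containments at every level $\delta = \epsilon + \eta$ with $\eta > 0$, and likewise for $t,t'$. Running the chain of paragraph three at each such $\delta$ gives $d(s \otimes t, s' \otimes t') \leq \epsilon + \eta$ for all $\eta > 0$, and letting $\eta \to 0$ yields the claimed bound $d(s \otimes t, s' \otimes t') \leq \epsilon$.
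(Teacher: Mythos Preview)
Your proposal is correct and follows essentially the same route as the paper: the first bullet is discharged via the graded structure of $\T_\epsilon$, and the second via monotonicity of $\otimes$ together with the decidable-equality identity $\T_\epsilon s' \otimes \T_\epsilon t' = \T_\epsilon(s' \otimes t')$. Your treatment is in fact more careful than the paper's: the paper simply asserts $s \subseteq \T_\epsilon s'$ from $d(s,s') \leq \epsilon$, tacitly assuming the infimum is attained, whereas you correctly route around this with the $\epsilon + \eta$ argument and a limit.
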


\begin{proof}
    The first property follows from the graded structure of $\T_{\epsilon}$. For the second, note $s \subseteq \T_{\epsilon}s'$ and $t \subseteq \T_{\epsilon}t'$. Thus $s\otimes t \subseteq \T_{\epsilon}s' \otimes \T_{\epsilon}t' = \T_{\epsilon}(s' \otimes t')$
\end{proof}

\noindent Finally, note this states if we replace a subcomponent of a Nash product by one $\epsilon$-different, the overall Nash product is no further than $\epsilon$-apart. This is an ideal property to have in a quantitative compositional theory of games. The analogous property for the Nash product of open games also holds. However, note that for sequential composition of open games we only have $\T_{\epsilon}(H\circ G)\subseteq \T_{\epsilon}H \circ T_{\epsilon}G$. Thus, replacing subgames with other subgames can increase distance between the overall games but (not shown here) in a bounded way --- enough to give a theory of {\em uncertainty budgets} for approximation of open games. 

\section{Conclusions and Future Work}

In summary, we have tackled the important problem of approximation for game theory. This is important as exact equilibria are often either impossible to determine due to inherent uncertainty in underlying data, or undesirable because the cost of exact computation is too high in comparison to approximate computation. In developing our approach to approximate game theory, it was vital to ensure our theory of approximation was compositional in that it respected the compositionality of game theory. Our fundamental idea was that a strategy was approximately optimal for a given utility function if it is optimal for an approximation of that utility function. This idea was developed for two models of game theory - firstly selection functions and secondly open games. In both, approximation was shown to preserve compositional structure. Finally, we showed how approximation and compositionality can be used to develop {\em Metric Game Theory} which allows one to replace approximately equal subcomponents of a Nash product and produce guarantees on how close the overall resulting systems are. \\[1ex]

\noindent By opening up a new field, this paper has much that can be done to expand it. Firstly, there are a number of other operators in open games which the methods developed here could be applied to. These include choice, contextualisation and infinite iteration~\cite{GhaniKLF18}. Secondly, these methods could also be extended to more advanced forms of open games, eg i) probabilistic open games using distance metrics such as Kantorovich on probability distributions; or ii) games based upon optics~\cite{riley2018categories}. Thirdly, the generality of our ideas means applications to other forms of games are natural For example, in Markov Decision Processes~\cite{hansen}, one can consider coalgebras
\[
\langle u,t \rangle : X \to R \times (A \to DX)
\]
where $X$ is a state space, $R$ is a type of utility, $A$ is a set of actions, $D$ is the probability monad, $u:X \to R$ is a utility function and $t:X \times A \to DX$ is a probabilistic transition function. Within MDPs one defines optimal policies which we can denote $\msf{Opt}$.
Our methodology suggests defining an approximate version
\[
\pi \in \msf{Opt} \; T_{\epsilon}\langle u,t\rangle \;\; \msf{iff} \;\;
\pi \in \msf{Opt} \; \langle u',t\rangle 
\]
where $u' \sim_{\epsilon} u$. One could even be more general and consider tweaks to the structure of $t$! Most excitingly, machine learning has features in common with game theory in that the former finds a parameter to minimise a loss function while the later finds a strategy to  maximise a utility function~\cite{cybernetics,catgrad}. Thus it is possible that the ideas developed here could align with 
learning based upon back propagation~\cite{baf} and, in particular, categorical deep learning~\cite{gavranović2024categorical} and its special case of geometric deep learning. This would contribution to the huge problem of XAI. It would be good to see what an implementation in code of our ideas looks like, eg in the implementation of open games~\cite{julesHaskell}. The breadth of application areas also suggests more theoretical work unifying these disparate examples, eg via a theory of approximation for utility indexed predicates, or maybe predicates fibred over metric spaces.

\bibliographystyle{eptcs}
\bibliography{generic}
\end{document}